\definecolor{myurlcolor}{rgb}{0,0,0.7}
\definecolor{myrefcolor}{rgb}{0.8,0,0}
\newcommand{\ket}[1]{\left|#1\right\rangle}
\newcommand{\braket}[2]{\langle #1 | #2 \rangle }
\newcommand\cH{{\mathcal H}}
\newcommand{\be}{\begin{equation}}
\newcommand{\ee}{\end{equation}}
\definecolor{darkorange}{RGB}{255,140,0}
\newcommand{\bea}{\begin{eqnarray}}
\newcommand{\eea}{\end{eqnarray}}
\def\mn#1{\langle #1 \rangle}
\newtheorem*{rep@theorem}{\rep@title}
\newcommand{\newreptheorem}[2]{%
\newenvironment{rep#1}[1]{%
 \def\rep@title{#2 \ref{##1}}%
 \begin{rep@theorem}}%
 {\end{rep@theorem}}}
\newtheorem{proposition}{Proposition}
\newtheorem*{result*}{Result}
\begin{document}
\title{Device-independent detection of genuine multipartite entanglement for all pure states}
\author{M. Zwerger$^1$, W. D\"ur$^1$, J.-D. Bancal$^2$ and P. Sekatski$^2$}
\affiliation{$^1$ Institut f\"ur Theoretische Physik, Universit\"at Innsbruck, Technikerstra{\ss}e 21a, 6020 Innsbruck, Austria\\
$^2$ Departement Physik, Universit\"at Basel, Klingelbergstra{\ss}e 82, 4056 Basel, Switzerland}

\date{\today}

\begin{abstract}
We show that genuine multipartite entanglement of all multipartite pure states in arbitrary finite dimension can be detected in a device-independent way by employing bipartite Bell inequalities on states that are deterministically generated from the initial state via local operations. This leads to an efficient scheme for large classes of multipartite states that are relevant in quantum computation or condensed-matter physics, including cluster states and the ground state of the Affleck-Kennedy-Lieb-Tasaki (AKLT) model. 
For cluster states the detection of genuine multipartite entanglement involves only measurements on a constant number of systems with an overhead that scales linear with the system size, while for the AKLT model the overhead is polynomial. In all cases our approach shows robustness against experimental imperfections.
\end{abstract}
\maketitle

\paragraph{Introduction.---}
Entanglement is an exclusive feature of quantum physics. As such it is believed to be the key ingredient in various quantum information processing tasks, like e.g. quantum computation, quantum metrology and, to some extent, quantum key distribution. Entanglement is a direct consequence of the fact that quantum states are modeled as operators on the tensor product of the Hilbert spaces for each system. From this mathematical perspective the question of entanglement detection has been solved, most notably by the approach based on entanglement witnesses \cite{Guhne2009}. 

However, from a more physical perspective such a view is not fully satisfactory, as in order to be applied to an experiment it requires to assume a given dimension for the Hilbert space of each system and an exact quantum description of the measurement devices. Yet, it is hard to characterize a measurement device exactly, and moreover a physical system typically has access to more levels and degrees of freedom than one uses to describe its state. Hence neither of the assumptions can be fully verified in practice. 

The most radical way to overcome these problems is offered by device-independent methods, which allow one to detect entanglement solely based on the Bell-like correlations of measurement outcomes collected in the experiment. While for the bipartite case many results of fundamental interest have been obtained \cite{Gisin1991,BellRev,Scarani2017,DIbipartite,DIbipartite2}, less is known for multipartite case. In particular, when it comes to genuine device-independent entanglement, results are only known for a few states \cite{Svetlichny1987,Collins2002,Seevinck2002,Bancal2010,Bancal2011,Bancal2011b,Chen2011,Curchod2014,McKague2016,Baccari2018}. This has to do with the difficulty to obtain multipartite Bell inequalities suited to specific states: typically, the starting point of the analysis is a fixed set of known Bell inequalities rather than the states themselves. In addition, multipartite Bell inequalities such as the Svetlichny inequality are inefficient to test experimentally, as they require an exponentially increasing number of measurement settings.

Here we circumvent these difficulties by introducing a scheme that allows one to detect genuine multipartite entanglement by testing bipartite Bell inequalities on states that are generated deterministically from the initial state via local operations and classical communication(LOCC) (see also \cite{Almeida2010,Supic2017} for a related approach). By using a covering set of such pairs, we derive a multipartite Bell inequality and show that a sufficiently large violation of the bipartite inequalities allows one to certify genuine multipartite entanglement in a device-independent way. This approach is not restricted to pure states, but has also some built-in robustness against noise and imperfections. We show genuine multipartite entanglement for all entangled pure states with arbitrary (finite) local dimension, and also for all mixed states that are sufficiently close to any such state. What is more, we obtain a scheme that is experimentally efficient for large classes of interesting states, including all (weighted) graph states with constant degree \cite{Raussendorf2003,He04,He06,Anders2006} and ground states of 1d spin models such as the AKLT model \cite{AKLT}. That is, only a constant (logarithmically growing) number of parties needs to be measured, and the overhead in terms of measurement settings is only linear (polynomial) in the system size $N$ respectively, as opposed to previous schemes that scale exponentially with $N$.

\paragraph{Statement of the main results.---}
Let $N \in \mathbb{N}$ denote the number of parties, $V=\{1, ...,N\}$, $E_k = \{i_k, j_k\}$ with $i_k, j_k\in V$ be pairs of parties and $E= \{E_1, ..., E_K\}$ their union. We define a graph $G= (V, E)$ by associating the parties with vertices and the pairs $E_k$ with edges. We say that $E$ is a \textit{covering set of pairs} for the $N$-partite system if the corresponding graph $G$ is connected.

The main result of this manuscript is the following:

\textit{Theorem 1}

Let $\ket{\psi}$ be a state in the Hilbert space $\otimes_{i=1}^{N}\mathbb{C}^{d_i}$, where $d_i \in \mathbb{N}$ for all $i \in \{1, ..., N\}$. If there exists a covering set such that for each pair in it there exist local operations and classical communication such that one can produce an entangled pure state between the two parties in all branches of the LOCC protocol, then one can show genuine multipartite entanglement between all $N$ parties in a device-independent way.

\textit{Sketch of the proof:}
The idea behind the proof is the following. One constructs a Bell expression by considering the sum of all bipartite Bell expressions for the pairs of parties appearing in the covering set $E$ and all branches of the LOCC protocol (that all result in a pure entangled state of the chosen pair $E_k$). The bipartite Bell inequalities are chosen to reach the quantum bound $\beta_{*}$ for each of the states, which is in fact possible due to a recent result \cite{Scarani2017}. One then shows that the quantum state achieves a value for the multipartite Bell expression which is incompatible with any biseparable quantum state
\be \label{eq: state exp}
\varrho_\textrm{\,BS}= \sum_\lambda\!\!\!\! \sum_{\substack{ \\ g_1\cap \,g_2 =\emptyset \\g_1\cup\, g_2 =\{1, \dots,  N \}}}\!\!\!\!
p\,(\lambda)\,  \rho_{g_1}(\lambda) \otimes \rho_{g_2}(\lambda)
\ee
of arbitrary local dimensions. Here the sum over $g_1$ and $g_2$ cover all possible splittings of the $N$ parties in two groups, $\rho_{g_1}$ are $\rho_{g_2}$ are arbitrary joint quantum states of the parties belonging to the corresponding group, and  $\lambda$ is the hidden variable.

For each pair of parties and each component of the biseparable state expansion appearing in Eq.~\eqref{eq: state exp} there are two possibilities regarding the expectation value of the bipartite Bell expressions in the multipartite Bell inequality. Either the two parties appearing in the bipartite inequality belong to same group $g_1$ or $g_2$, in which case they may always achieve the maximal value of the Bell expression $\beta_{*}$. Or they belong to different groups $g_1$ and $g_2$, in which case they end up in a separable state in each branch of the LOCC protocol and can at most contribute a value corresponding the local bound $\beta_{L} < \beta_{*}$. Since $E$ is a covering set of pairs, for each grouping $g_1|g_2$ in the biseparable state expansion of Eq.~\eqref{eq: state exp} there will be at least one term in the Bell expression, corresponding to some pair $E_k$, whose value is limited to $\beta_L<\beta^*$. Hence, the expected value of the overall Bell expression on a biseparable state $\rho_\textrm{\,BS }$ is also strictly smaller than $\beta^*$. For a detailed proof see the appendix.

In fact, in the ideal case, where the observed violation of the constructed Bell expression is maximal $\beta^*$, we show that it is incompatible with any quantum state $(1-\varepsilon_\text{BS})\varrho_Q + \varepsilon_\text{BS}\,\varrho_\textrm{\,BS}$ that has a non-zero biseparable weight $\varepsilon_\text{BS}$, i.e., with any mixture of a genuinely multipartite entangled state and a biseparable state with arbitrarily small nonzero weight. We will also use this property to show the robustness our result later in the paper.

\textit{Theorem 2}

All genuine entangled pure states fulfill the requirements from Theorem 1.

\textit{Sketch of the proof:}
One can show that almost any measurement brings a genuinely entangled $N$-partite state $\ket{\Psi}$ to $d_1$ post-measurement $(N-1)$-partite genuinely entangled states $\ket{\Psi_k}$, where $d_1$ is the local Hilbert space dimension. This can be iterated until one arrives at a bipartite state. The set of measurements for which it does not work is of measure zero in each step. This guarantees that there are measurements on $N-2$ parties such that one obtains an entangled bipartite state in all branches. For a detailed proof see the appendix.

While this works for all pure states, in general the protocol is not efficient. From the proof of Theorem 2 one sees that up to $N-2$ parties need to carry out measurements with at least two outcomes each. This gives rise to exponentially many branches in which one needs to test a bipartite Bell inequality. However, there are families of states for which only a limited number of parties are involved in the LOCC protocol for each pair, and thus the protocol is efficient. This is described in more detail in the next section.

The results above are very general. We now illustrate its usefulness with some examples of classes of states for which one can (efficiently) show genuine multipartite entanglement in a device-independent way.

\paragraph{Connected, generalized and weighted graph states.---}
This family of states plays an important role in quantum information, in particular in the context of quantum error correction, measurement-based quantum computation and quantum networks. It has first been defined for qubits \cite{Raussendorf2003,He04,He06} and later been generalized to higher local dimensions \cite{Zhou2003,Bahramgiri2006}. The toric code and its generalizations \cite{Kitaev2006} also belong to this family. For qubits, weighted graph states \cite{Anders2006} can be defined by $\ket{G} = \prod_{\{i,j\} \in E}U_{ij} \ket{+}^{\otimes N}$ Here, $U_{ij}= \rm{diag}(1, 1, 1, e^{i\phi_{ij}})$ in the computational basis, $\ket{+}$ is the $+1$ eigenstate of the Pauli $X$ operator and $V$ and $E$ are sets of vertices and edges as above. If $G=(V,E)$ is connected, then the (weighted) graph state $\ket{G}$ is said to be connected. For a discussion of the case of local dimension $d > 2$ see \cite{Zhou2003,Bahramgiri2006}.

It is easy to see that these states fulfill the requirements from Theorem 1. The set $E$ itself is a covering set of pairs and any pair in it can be isolated via measurements of all qubits in the neighborhood in the computational basis \cite{Raussendorf2003,He04,He06}, since the gate $U_{ij}$ commutes with the measurement in the $Z$ basis. The efficiency of the protocol depends on the number of outcomes for the measurements of the qubits in the neighborhood of each pair. The number of neighbors is specified by the degree $\mathrm{deg}(G)$ of the graph, and is at most $2\cdot\mathrm{deg}(G)$. For all measurement outcomes, one obtains a state that is equivalent up to local Pauli corrections to $U_{ij}\ket{+}^{\otimes 2}$. The covering set can always be chosen to contain at most $N$ pairs. This can be achieved by first choosing one vertex and adding all edges connecting this vertex to the set $E'$. One then continues this step for all neighbors of this vertex, but adds only those edges that connect vertices which were not already connected in the previous round. The size of the neighborhood, i.e., the number of vertices adjacent to a pair, enters exponentially in the total number of operators which need to be measured. This is because one has to take all measurement outcomes into account. For qubit graph states one has to optimize over all possible covering sets and over all local unitary (LU) equivalent states for each pair individually. The concept of local complementation can substantially change the degree of a graph, e.g. for a binary tree graph a sequence of local complementations \cite{He04,He06} can change the degree from $3$ to $N-1$ and vice versa. 
In particular, as long as the maximal degree of the graph grows at most logarithmically with number of vertices $N$ the protocol is efficient. For constant degree one indeed obtains a linear scaling, as there are only linearly many terms in the Bell inequality, and each has support on a constant number of parties only. 
This holds e.g. for prominent graphs states defined on square and triangular lattices, which are also universal resources for measurement-based quantum computation \cite{Ra2001,He06}.

\paragraph{Affleck-Kennedy-Lieb-Tasaki model.---}
The AKLT model \cite{AKLT} is a generalization of the one-dimensional (quantum) Heisenberg spin model, with Hamiltonian $H=\sum_j {\bf S}_{(j)}\cdot {\bf S}_{(j+1)} + \tfrac{1}{3}({\bf S}_{(j)}\cdot {\bf S}_{(j+1)})^2$ where ${\bf S}_j$ is the spin-1 operator acting on system $j$. The model is exactly solvable and can be viewed as a prototype of a matrix product state (MPS) \cite{Fannes1992} (for reviews see \cite{Verstraete2008,Orus2014}). One can certify genuine multipartite entanglement in the AKLT model in a device-independent way efficiently. The preparation of entangled states of pairs $(j,j+1)$, as required in Theorem 1, can be achieved by measuring only a small neighborhood of each pair, and for all measurement outcomes one is left with an entangled pair. Using the notation introduced in Fig. \ref{AKLT}, it suffices to measure the neighboring spins of the pair in the basis $\{ \ket{\tilde0}, \ket{\tilde1}, \ket{\tilde2} \}$, where each outcome occurs with probability one third. For outcomes corresponding to $\ket{\tilde0}$ and $\ket{\tilde1}$ the chain is decoupled, and for the outcome corresponding to $\ket{\tilde2}$, which corresponds to entanglement swapping at the level of the virtual links, one has shifted the problem of cutting to the next site. One can then repeat the probabilistic cutting. Measuring $n$ sites on each side of the pair results in a success probability of cutting out the pair of $p_{\textrm{cut}} \ge 1- 2(\tfrac{1}{3})^n$. The state of the resulting pair depends on the outcome of the measurements, and is always entangled and pure \footnote{It is maximally entangled only for some of the measurement outcomes.}.
The success probability goes to one exponentially fast with $n$. From the results presented below, it follows that $n=O(\log N)$, as such a reduced success probability yields a smaller, non-maximal violation of the bipartite inequality. This corresponds to a polynomial number of measurement settings and hence an efficient scheme to detect genuine entanglement in the AKLT model. See also the appendix.

\begin{figure}[ht]
\centering
\includegraphics[scale=0.5]{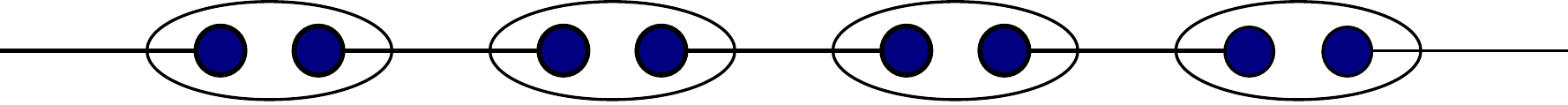}
\vspace{0mm}
\caption{Pictorial representation of the AKLT state. There are two virtual qubits (blue dots) at each site. Dots connected by an edge represent singlet states $\ket{\psi^{-}}$ and ellipses refer to projections onto the three-dimensional triplet subspace, where one makes the following identification: $\ket{\tilde0}=\ket{00}, \ket{\tilde1}=\ket{11},\ket{\tilde2}=\ket{\psi^+} = \tfrac{1}{\sqrt{2}} \left( \ket{01} + \ket{10} \right)$.}
\label{AKLT}
\end{figure}

\paragraph{Dicke states.---}
Dicke states \cite{Dicke} are an important class of multipartite entangled states. An $N$-qubit Dicke state with $k$ excitations is given by $\ket{D_k^N} = {N \choose k}^{-1/2} \sum_{\mathrm{permutations}} \ket{1}^{\otimes k} \ket{0}^{\otimes N-k}$, where the sum refers to all permutations of the parties. Entangled states for any pair of parties can be produced deterministically via Pauli $Z$ and $X$ measurements (see the appendix) and thus it follows from Theorem 1 that one can show genuine multipartite entanglement in a device-independent way.

\paragraph{Robustness and experimental feasibility.---}
The robustness to imperfections is crucial for the experimental feasibility of any protocol. We show that our method to reveal genuine multipartite entanglement is robust to noise and study two different situations.
From the analysis presented in the appendix it follows that in the presence of a non-maximal violation of the bipartite inequality one can show genuine entanglement only for up to
\be
M=  \left \lfloor \frac{\beta_*-\beta_L}{\beta_*-\beta} \right \rfloor 
\ee
parties, where $\beta$ is the observed value of the bipartite inequality. One sees that the robustness is determined by the ratio of the local bound $\beta_{L}$ and the quantum bound $\beta_{*}$. However, one obtains a certain robustness for \textit{any} entangled pure state. That is, there exists an $\epsilon$-ball of non-zero measure around each entangled pure state where we can confirm genuine entanglement with our method. Consider $\rho=(1-\epsilon)\rho +\epsilon/d^N I$, i.e., a mixture of the state with the identity on the whole space of $N$ qudits. In this case the observed value $\beta=(1-\epsilon)\beta_*$, and we obtain that we have genuine $N$-party entanglement if $\epsilon \leq (1-\beta_L/\beta_*))/N$. When using the tilted CHSH Bell inequality for qubits \cite{tCHSH}, the robustness is connected to the amount of entanglement of the produced pairs, as the ratio of $\beta_L/\beta_*$ is smaller for pairs with more entanglement, leading to a better robustness. Finding new Bell inequalities with a better ratio will therefore improve the experimental feasibility.

We now turn to an explicit example and consider the impact of local depolarizing noise (LDN) acting on each qubit of a cluster state.  LDN can be viewed as a worst case local noise model \cite{Dur2005}. It is parametrized by $p \in [0, 1]$, where $p=1$ corresponds to no noise and $p=0$ to complete depolarization and is described by a map ${\cal E}(p)\rho = p\rho + (1-p)/4 \sum_j\sigma_j \rho \sigma_j$. We choose $1D$ and $2D$ cluster states for testing the robustness and assume an infinite system size (or periodic boundary conditions). In Fig. \ref{Figure_CHSH} we plot the number of parties $M$ for which one can show genuine multipartite entanglement as a function of the noise $1-p$. We choose a covering set which only contains nearest neighbor pairs. For cluster states it is possible to establish maximally entangled states between any pairs and hence we employ the CHSH inequality \cite{CHSH} as the bipartite Bell inequality, which has $\beta_{*}=2\sqrt{2}$ and $\beta_{L}=2$.

\begin{figure}
 \centering
 \subfloat[\centering]{\includegraphics[width=1\columnwidth]{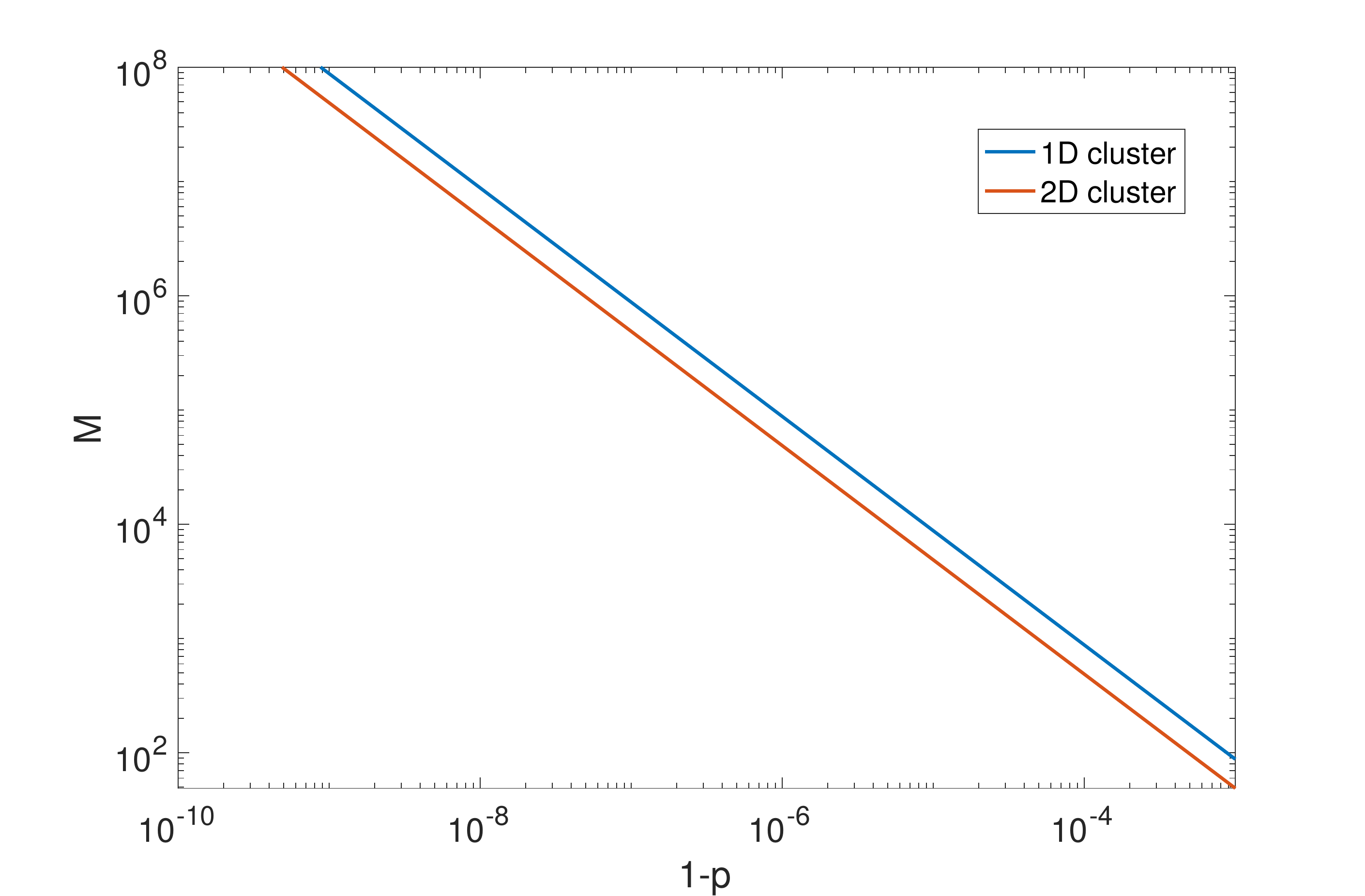} \label{CHSH1D2D}}
 \hfill
 \subfloat[\centering]{\includegraphics[width=1\columnwidth]{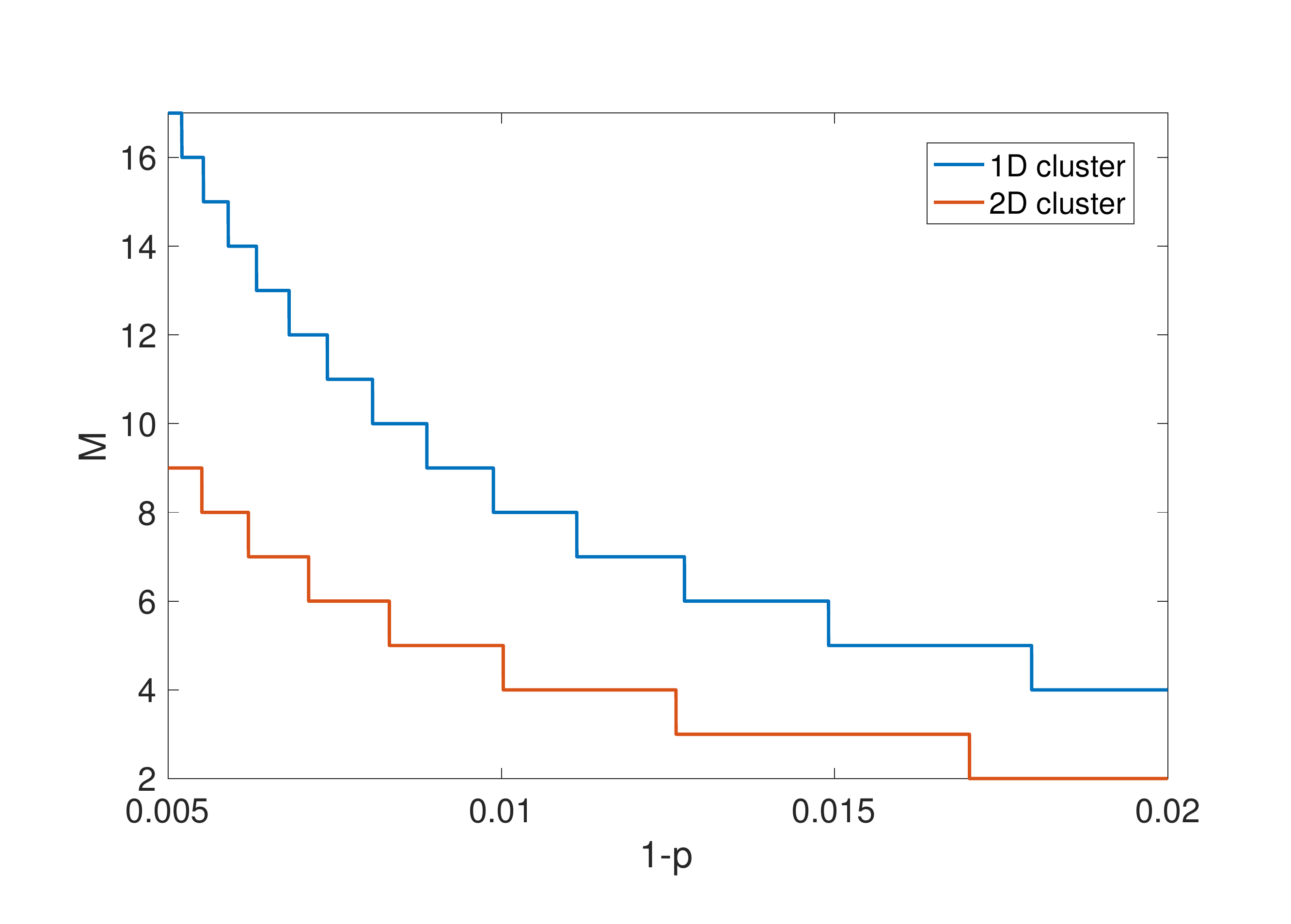} \label{SmallM}}
 \vspace{0cm}
\caption{(a) Plot of the maximal number of parties for which one can certify entanglement as a function of the noise $1-p$. (b) Similar plot for larger, experimentally better accessible values of $1-p$.}
\label{Figure_CHSH}
\end{figure}

The plots suggest a polynomial relation between $p$ and $M$.

In addition, we investigate a setup where 1D cluster states are generated via imperfect gates. Imperfect operations are modeled by LDN acting before the perfect operations. We use two different parameters $p_1$ and $p_2$ characterizing noise the of single- and two-qubit operations (for more details see the appendix). The maximal number of parties for which genuine multipartite entanglement can be shown is plotted in Fig. \ref{Figure} as a function of $p_1$ and $p_2$.

\begin{figure}[ht]
\centering
\includegraphics[scale=0.3]{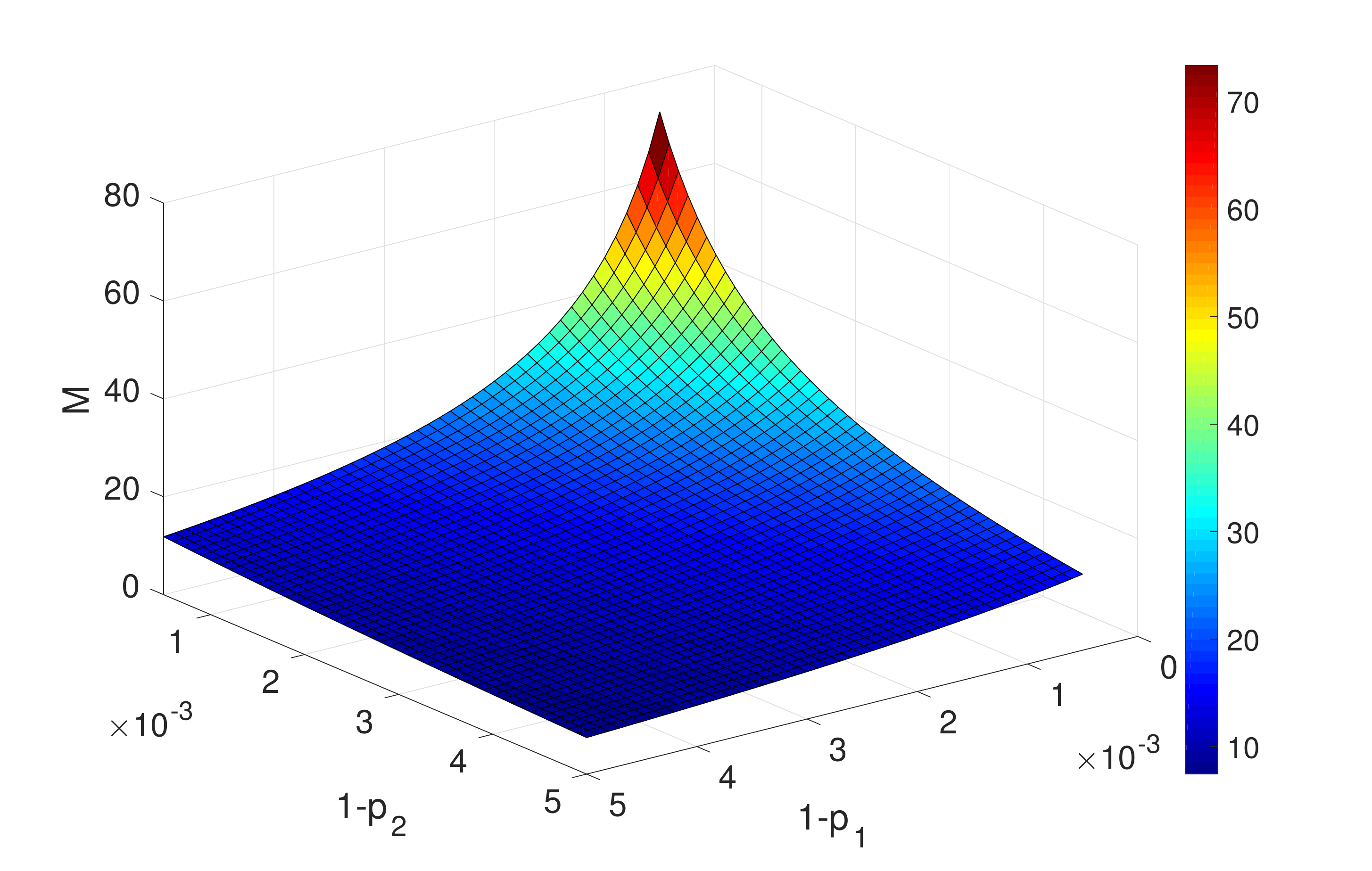}
\vspace{0mm}
\caption{Plot of the maximal number of parties for which one can certify genuine multipartite entanglement as a function of the noise $1-p_1$ and $1-p_2$. }
\label{Figure}
\end{figure}

\paragraph{Conclusion and outlook.---}
In this work we have introduced a scheme to detect genuine multipartite entanglement in a device-independent way based on bipartite Bell tests of entangled pairs that are deterministically generated from the initial state via LOCC. This allows not only for the detection of genuine entanglement of all pure states, but it is also applicable to mixed states with a sufficiently small amount of noise. The robustness of the scheme is directly related to the ratios of the local bound and the quantum violation of bipartite Bell inequalities, and any improvement on such inequalities directly leads to a larger set of states whose genuine entanglement can be certified device independently using our approach. While in general the scheme is not efficient, for important classes of states including the AKLT model and the 2D cluster state (which is a universal resource for measurement-based quantum computation), we have shown that our certification scheme is efficient and hence directly experimentally applicable.

We finally remark that a similar approach can be employed to reveal genuine non-locality for large classes of states \footnote{in preparation (2018)}, where in this case the criterion is more stringent as maximally entangled qubit pairs on a covering set need to be generated deterministically. Still, many multipartite states including the toric code, the ground state of the AKLT model or all connected graph states, can be shown to be genuine non-local.

\paragraph{Acknowledgments.---}
This work was supported by the Austrian Science Fund FWF (P28000-N27 and P30937-N27), the Swiss National Science Foundation SNSF (P300P2-167749 andÊ 200021-175527), the NCCR QSIT (PP00P2-150579), the Army Research Laboratory Center for Distributed Quantum Information via the project SciNet and the EU via the integrated project SIQS.

\section*{Appendix}
\label{appendix}
\section{Proof of Theorem 1}

We now construct the Bell expression. By the premise of the theorem, for each pair $E_k=(i_k, j_k)$ in the covering set $E$ there exists a LOCC protocol, such that each branch results in a pure entangled state  $\ket{ \Psi( {\bm \mu}^{(k)} )}$ shared by the parties $i_k$ and $j_k$, where ${\bm \mu}^{(k)}$ describes the branch of the LOCC protocol, i.e., it contains the ordered sequence of all the inputs (specifying the local operations) and outputs (labeling the obtained outcomes) of all the parties that lead to the state  $\ket{ \Psi( {\bm \mu}^{(k)} )}$. Note that we can always tailor the LOCC protocol such that all possible branches ${\bm \mu}^{(k)}$ occur (if a protocol eventually contains an operation with an outcome which never occurs for our input state one can simply merge this outcome with another one without altering the action of the protocol on that state). For any entangled bipartite state $\ket{ \Psi( {\bm \mu }^{(k)})}$ there exists a tailored Bell inequality which it is maximally violated by this state. In the two qubit states it is enough to consider the tilted-CHSH family of Bell inequalities \cite{tCHSH} for this purpose, and the general case has been recently shown in \cite{Scarani2017}. Let us denote the coefficients of this Bell test by $B_{{\bf y} | {\bf b}\, | {\bm \mu}^{(k)}}$, where $\bf y$ are the outputs and $\bf b$ the inputs of the measurements performed by the remaining parties $i_k$ and $j_k$ on the resulting state. In addition, we can always rescale the coefficients of the Bell expression such that the its maximal quantum value, attained by the state $\ket{ \Psi( {\bf x}^{(k)} {\bf a}^{(k)} )}$ for the correct measurement, is given by some predefined constant $\beta^*$, while the local bound $\beta_L<\beta^*$ remains strictly smaller. We now define the following Bell test
\be
B_{E_k} = \sum_{{\bm \mu }^{(k)} }\sum_{\bf y, b} B_{{\bf y} | {\bf b}\, | {\bm \mu}^{(k)}}
\ee
and consider its expected value for our initial state. The Bell test consists of ascribing the value $B_{{\bf y} | {\bf b}\, | {\bm \mu}^{(k)}}$ to the event where the LOCC protocol resulted in the branch $\bm \mu^{k}$ and the subsequent measurements of the parties $i_k$ and $j_k$ with setting $\bf b$ resulted in outcomes $\bf y$. The expected value of this Bell test on our state is hence given by value $\mn{B_k}= \beta^*$, since this is the expected value for each branch. 

Next consider the global Bell test which consists of the sum of $B_k$ defined above for all pairs $(i_k,j_k)$ appearing in the covering
\be
B = \sum_{E_k \in E} B_{E_k}.
\ee
Its expectation value on the initial state is simply given by $\mn{B} = \beta^* K$, the expected value $\beta^*$ of each $B_k$ times the number of pairs $K$ in the covering set $E$. This is still the maximal possible quantum value for the test, as it can not be improved for none of the pairs $E_k$ and none of the branches $\bm \mu^{(k)}$.

We now show that this value cannot be achieved by a bi-separable quantum state 
\be \label{app: bs}
\varrho_\textrm{\,BS}= \sum_\lambda\!\!\!\! \sum_{\substack{ \\ g_1\cap \,g_2 =\emptyset \\g_1\cup\, g_2 =\{1, \dots,  N \}}}\!\!\!\!
p\,(\lambda)\,  \rho_{g_1}(\lambda) \rho_{g_2}(\lambda)
\ee
Since $E$ corresponds to a connected graph for any term $g_1|g_2$ in the model \eqref{app: bs} there is at least one pair $E_k$  such that $i_k$ and $j_k$ are in different groups. By definition of LOCC, these parties $i_k$ and $j_k$ starting in different groups remain in a separable state in each branch of the protocol
\be
 \rho_{g_1}(\lambda) \rho_{g_2}(\lambda) \xrightarrow{\bm \mu^{(k)},\text{tr}_{\overline{i_k,j_k} }} \sum_\xi p(\xi) \rho_{i_k}(\xi) \rho_{j_k}(\xi). 
\ee
Hence the expectation value of the $B_{E_k}$ is bounded by some local bound $\beta_L<\beta^*$ which depends on the particular form of the inequality but is anyway strictly smaller than the quantum value attained by the target state. It follows for that the expected value of the global Bell test for a biseparable state
\be
\label{BBS}
\mn{B}_{BS} \leq (K-1)\beta^* + \beta_L < K \beta^*.
\ee
Hence, observing the value $K \beta^*$, or equivalently, reaching the maximal quantum value in each branch of the LOCC protocol for all pairs in the cover $E$ rules out a biseparable model. Therefore one can conclude that the statistics are only compatible with a non-separable model. This shows genuine multipartite entanglement in a device-independent way, since there are no assumptions on the underlying system except that it obeys the laws of quantum mechanics.

\section{Proof of Theorem 2}

Consider an entangled N-partite pure state $\ket{\Psi}$. Let $d$ be the dimension of the first-party Hilbert space supporting $
\ket{\Psi}$.  We will now show that theres exists a measurement basis for the first party, for which the post-measured state of the remaining $N-1$ parties is fully entangled for each outcome, i.e., all post-measurement states are entangled, or non-product, in each possible bipartition. By repeating the argument it is then possible to find measurements for all $N-2$ parties such that the final 2-partite state is entangled in all branches of possible measurement outcomes.

The Schmidt decomposition of the state $\ket{\Psi}$ in the splitting (first party)|(the remaining $N-1$ parties) reads
\be
\ket{\Psi} = \sum_{k=1}^d a_k \ket{k} \ket{R_k},
\ee
with $\braket{k}{\ell} = \braket{R_k}{R_\ell}=\delta_{k,\ell}$.
Let us choose the splitting of the remaining $N-1$ parties in two groups $G_1|G_2$. Importantly, there are only finitely many of such splittings, and for showing entanglement of the post-measurement state it is sufficient to show that the state is not product in any splitting.

Consider the projection of the first system onto the state 
\be
\ket{\xi }= \sum_{k=1}^d \xi_k \ket{k},
\ee
which yields the post-measurement state
\be
\ket{\Psi_\xi} \propto\braket{\xi}{\Psi} = \sum_k \xi_k^* a_k \ket{R_k}.
\ee

Note that for any such state the probability to get an outcome is nonzero $\sum_k | \xi_k a_k|^2 \neq 0$. 
We are interested in the set of states 
\be
\Xi = \{\ket{\xi}, \text{such that} \ket{\Psi_\xi} \propto \ket{\Omega_\xi}_{G_1} \ket{\Lambda_\xi}_{G_2} \}
\ee
 for which the post-measurement state $\ket{\Psi_\xi}$ is product in the $G_1|G_2$ splitting. In particular,
we will show that it is a set of measure zero within the $d$-dimensional Hilbert space of the first party $\cH_d$. 

To show this let us take a set of $d$ linearly independent states $\{\ket{\zeta_k}\in \Xi\}_{k=1}^d$ and denote the corresponding post-measurement states 
\be
\braket{\zeta_k}{\Psi} = c_k \ket{\Psi_k} = c_k \ket{\Omega_k}_{G_1}\ket{\Lambda_k}_{G_2} .
\ee
If such a set does not exist, then all the states in $\Xi$ belong to a $(d-1)$-dimensional subspace of $\cH_d$ at most and the proof is complete.

As the states $\ket{\zeta_k}$ are linearly independent they define a basis for $\cH_d$ (not necessarily orthonormal) , which can be used to uniquely express any state 
\be
\ket{\xi} = \sum_{k=1}^d z_k \ket{\zeta_k},
\ee
with the normalization constraint
\be
1= \braket{\xi}{\xi} = \sum_{k,\ell} \, z_\ell^*  \,G_{ \ell} z_k  = {\bf z}^\dag G \, {\bf z},
\ee 
where the gram matrix $G_{k \ell} = \braket{\zeta_\ell}{\zeta_k}$ is invertible due to the linear independence of $\{\ket{\zeta_k}\in \Xi\}_{k=1}^d$.
The post-measurement state reads
\be
\ket{\Psi_\xi} = \sum_k \underbrace{z_k c_k}_{\equiv y_k} \ket{\Omega_k}\ket{\Lambda_k}.
\ee
Hence, in terms of the new parameterization ${\bf y}$ we want to show that  among all the vectors ${\bf y}$ normalized as
\be\label{eq: normalisation}
{\bf y}^\dag \, C^{-1} G C^{-1}\,  {\bf y} =1
\ee
with $C = \text{diag}\{c_k\}_{k=1}^d $, the subset satisfying 
\be\label{eq: prod cond}
\ket{\Psi_{\bf y}} = \sum_k y_k \ket{\Omega_k}\ket{\Lambda_k} \propto \ket{\Omega_{\bf y}}\ket{\Lambda_{\bf y}}
\ee 
is of measure zero. In fact, the normalization of the state $\ket{\Psi_{\bf y}}$ as well as a global phase factor is irrelevant for the proportionality condition above, neither are they important for the question of the measure of the set of states satisfying Eq.~\eqref{eq: prod cond} (we can equivalently look at the cartesian product of normalized quantum states and nonzero complex number $(z, \ket{\Psi}) \simeq z \ket{\Psi}$). Hence, we have to show that the set of complex vectors ${\bf y}\in \mathbb{C}^{d}$ satisfying 
\be
\sum_k y_k \ket{\Omega_k}\ket{\Lambda_k} \propto \ket{\Omega_{\bf y}}\ket{\Lambda_{\bf y}}
\ee
is of measure zero within $\mathbb{C}^d$.

In addition we know that $\text{span}\{\ket{\Omega_k}\ket{\Lambda_k}\}_{k=1}^d =\text{span}\{\ket{R_k}\}_{k=1}^d$ is of dimension $d$, and 
\be
\dim\left(\text{span}\{\ket{\Omega_k}\}_{k=1}^d\right),\dim\left(\text{span}\{\ket{\Lambda_k}\}_{k=1}^d\right) \geq 2
\ee
as otherwise there is no possibility for the state $\ket{\Psi}$ to be entangled. Furthermore,
\be
\dim\left(\text{span}\{\ket{\Omega_k}\}_{k=1}^d\right) \times \dim\left(\text{span}\{\ket{\Lambda_k}\}_{k=1}^d\right) \geq d.
\ee

\subsection{Qubit case of $d=2$}
\label{sec: d=2}

In the case $d=2$, there are just two states $\ket{\Omega_1}\ket{\Lambda_1}$ and $\ket{\Omega_2}\ket{\Lambda_2 }$, which moreover are pairwise linearly independent $\braket{\Omega_1}{\Omega_2}\neq 1$ and $\braket{\Lambda_1}{\Lambda_2}\neq 1$. It follows that any state
\be
\ket{\Psi_{\bf y}}=y_1 \ket{\Omega_1}\ket{\Lambda_1} + y_2 \ket{\Omega_2}\ket{\Lambda_2} \notin \Xi\quad \text{for} \quad y_1 y_2 \neq 0 .
\ee
To see this we use the following observation:
\begin{proposition}
Given four states $\ket{a_1},\ket{a_2} \in \cH_\Omega$ and $\ket{b_1},\ket{b_2} \in \cH_\Lambda$, and a product state $\ket{\Psi}\in \cH_\Omega\otimes \cH_\Lambda$, the following always holds
\be
\braket{a_1,b_1}{\Psi} \braket{a_2, b_2}{\Psi} = \braket{a_1,b_2}{\Psi} \braket{a_2, b_1}{\Psi}.
\ee
\end{proposition}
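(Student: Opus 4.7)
The plan is to exploit the product-state hypothesis on $\ket{\Psi}$ directly. By definition, $\ket{\Psi}$ being a product vector in $\cH_\Omega\otimes\cH_\Lambda$ means that there exist $\ket{\phi}\in\cH_\Omega$ and $\ket{\chi}\in\cH_\Lambda$ with $\ket{\Psi}=\ket{\phi}\otimes\ket{\chi}$. The first step is to substitute this factorization into each of the four inner products in the identity, using the elementary tensor-product identity $\braket{a,b}{\phi\otimes\chi}=\braket{a}{\phi}\braket{b}{\chi}$ for any $\ket{a}\in\cH_\Omega$, $\ket{b}\in\cH_\Lambda$.

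With the factorization in hand, the left-hand side becomes
\[
\braket{a_1}{\phi}\braket{b_1}{\chi}\,\braket{a_2}{\phi}\braket{b_2}{\chi},
\]
while the right-hand side becomes the same product of four scalars with the roles of $b_1$ and $b_2$ swapped. Since these are complex numbers and multiplication is commutative, the two expressions coincide, and the claim follows.

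There is no real obstacle: the content of the proposition is a rephrasing of tensor-product factorization together with commutativity of scalar multiplication. The reason it is worth stating separately, I expect, is its intended use in the remainder of the qubit analysis in Section \textbf{Qubit case of $d=2$}. The identity can equivalently be read as the vanishing of the $2\times 2$ determinant with entries $\braket{a_i,b_j}{\Psi}$. Viewed this way, it provides a polynomial obstruction that any product state must satisfy, and this is exactly the kind of algebraic constraint that makes the set of ``bad'' post-measurement parameters a proper algebraic subvariety of $\mathbb{C}^d$, and hence of measure zero, in the later argument.
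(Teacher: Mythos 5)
Your proof is correct and follows essentially the same route as the paper's: factorize $\ket{\Psi}$ as a product vector, expand each overlap into scalar factors, and conclude by commutativity of scalar multiplication. The determinant remark matches the intended use in the $d=2$ analysis, but no further comparison is needed.
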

\begin{proof}
For any product state $\ket{\Psi} = \ket{\Omega}\ket{\Lambda}$ one has
\begin{align}
\braket{a_1,b_1}{\Psi} \braket{a_2, b_2}{\Psi} &= \braket{a_1,b_1}{\Omega,\Lambda} \braket{a_2, b_2}{\Omega,\Lambda}\nonumber\\
\braket{a_1}{\Omega}\braket{b_1}{\Lambda} \braket{a_2}{\Omega}\braket{b_2}{\Lambda}&=
\braket{a_1}{\Omega}\braket{b_2}{\Lambda} \braket{a_2}{\Omega}\braket{b_1}{\Lambda}\nonumber\\
\braket{a_1,b_2}{\Omega,\Lambda} \braket{a_2, b_1}{\Omega,\Lambda} &= \braket{a_1,b_2}{\Psi} \braket{a_2, b_1}{\Psi}
\end{align}
which proves the proposition.
\end{proof}

Decomposing $\ket{\Omega_2} = \alpha \ket{\Omega_1} + \beta \ket{\Omega_1^\perp}$ and $\ket{\Lambda_2} = \gamma \ket{\Lambda_1} + \delta \ket{\Lambda_1^\perp}$, and using this observation for the states $\ket{a_1} = \ket{\Omega_1}, \ket{a_2} = \ket{\Omega_1^\perp}$, $\ket{b_1} = \ket{\Lambda_1}, \ket{b_2} = \ket{\Lambda_1^\perp}$ we get that a product state $\ket{\Psi_{\bf y}}$ must satisfy
\begin{align}
(y_1 + y_2 \alpha \gamma) (y_2 \beta \delta) = (y_2 \alpha \delta) (y_2 \beta \gamma) \Leftrightarrow\nonumber\\
y_1 y_2 \beta \delta =0 \Leftrightarrow y_1 y_2 =0, \label{eq: d=2},
\end{align}
where $\beta$ and $\delta$ are nonzero by $\braket{\Omega_1}{\Omega_2}\neq 1$ and $\braket{\Lambda_1}{\Lambda_2}\neq 1$. Hence, there are only the two states with either $y_1=0$ or $ y_2 =0$ that lead to a product post-measurement state.

\subsection{General case $d>2$}
We will proceed by recursion. 

First, let us rearrange the components $\{ \ket{\Omega_k,\Lambda_k}\}_{k=1}^d$ in such a way that  
\be\label{eq: LI}
\braket{\Omega_1}{\Omega_2}\neq 1 \quad \text{and}  \quad \braket{\Lambda_1}{\Lambda_2}\neq 1.
\ee 
From linear independence we get that $\braket{\Omega_1,\Lambda_1}{\Omega_2,\Lambda_2}\neq 1$, hence either $\braket{\Omega_1}{\Omega_2} \neq 1$, or $\braket{\Lambda_1}{\Lambda_2} \neq 1$ or both. If both inequalities hold there is nothing to rearrange. Otherwise, say $\braket{\Omega_1}{\Omega_2} \neq 1$ but $\ket{\Lambda_1}=\ket{\Lambda_2}$, consider the next components $\ket{\Omega_k,\Lambda_k}$ for $k\geq 2$. As $\dim( \text{span}\{\ket{\Lambda_k}\} \geq 2$ there has to be at least one state $\ket{\Omega_\ell,\Lambda_\ell}$ with $\braket{\Lambda_1}{\Lambda_\ell}\neq 1$. But in addition, either $\braket{\Omega_1}{\Omega_\ell}$ or  $\braket{\Omega_2}{\Omega_\ell}$ has to be not equal to one as $\ket{\Omega_1}$ and $\ket{\Omega_2}$ are different states. Hence, either the pair $(\ket{\Omega_1,\Lambda_1}, \ket{\Omega_\ell, \Lambda_\ell})$ or $(\ket{\Omega_2,\Lambda_2}, \ket{\Omega_\ell, \Lambda_\ell})$ satisfy \eqref{eq: LI}.
Next we show the following.

\begin{proposition}\label{prop: 2}
Given $d\geq 3$ linearly independent states $\{\ket{\Omega_k,\Lambda_k}\}_{k=1}^{d-1}$  such that the condition
\be
\sum_{k=1}^{d-1} x_k \ket{\Omega_k,\Lambda_k} \propto \ket{\Omega_{\bf x}, \Lambda_{\bf x}}
\ee
(for some $\ket{\Omega_{\bf x}, \Lambda_{\bf x}}$) is fulfilled by vectors $\bf x$ that form a set of measure zero inside $\mathbb{C}^{d-1}$, then for linearly independent states  $\{\ket{\Omega_k,\Lambda_k}\}_{k=1}^{d-1}\cup\{\ket{\Omega_d,\Lambda_d}\}$ the condition 
\be
\sum_{k=1}^{d} y_k \ket{\Omega_k,\Lambda_k} \propto \ket{\Omega_{\bf y}, \Lambda_{\bf y}}
\ee
(for some $\ket{\Omega_{\bf y}, \Lambda_{\bf y}}$) is fulfilled by vectors $\bf y$ forming a set of measure zero inside $\mathbb{C}^d$.
\end{proposition}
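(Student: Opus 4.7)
The plan is to reduce the statement to a standard fact from algebraic geometry: a proper Zariski-closed subset of $\mathbb{C}^d$ has Lebesgue measure zero. Concretely, I would recognise the ``bad set'' as the preimage, under a linear map, of the Segre variety of product vectors in $\cH_\Omega\otimes\cH_\Lambda$, and then use the inductive hypothesis only at the very end, to exhibit a single point lying outside this bad set.

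First I would observe that a vector $\ket{\phi}\in \cH_\Omega\otimes\cH_\Lambda$ is a product state if and only if, when written as a matrix of coefficients in any fixed product basis, all of its $2\times 2$ minors vanish; these minors are homogeneous polynomials of degree two in the components of $\ket{\phi}$. Composing with the linear map $\Phi\colon \mathbb{C}^d\to \cH_\Omega\otimes\cH_\Lambda$ defined by $\Phi({\bf y}) = \sum_{k=1}^d y_k \ket{\Omega_k,\Lambda_k}$, it follows that the set
\be
S \;=\; \bigl\{\,{\bf y}\in\mathbb{C}^d \,:\, \Phi({\bf y}) \propto \ket{\Omega_{\bf y}}\ket{\Lambda_{\bf y}} \text{ for some } \ket{\Omega_{\bf y}},\ket{\Lambda_{\bf y}}\bigr\}
\ee
is the common zero locus of a finite family of polynomials $p_\alpha \in \mathbb{C}[y_1,\dots,y_d]$, i.e.\ a Zariski-closed subset of $\mathbb{C}^d$. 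Since any proper Zariski-closed subset of $\mathbb{C}^d$ has Lebesgue measure zero, it suffices to exhibit one point ${\bf y}^*$ with $\Phi({\bf y}^*)$ not product.

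Here I would invoke the inductive hypothesis. The sub-family $\{\ket{\Omega_k,\Lambda_k}\}_{k=1}^{d-1}$ is linearly independent as a sub-family of a linearly independent family, and by hypothesis the set of ${\bf x}\in\mathbb{C}^{d-1}$ for which $\sum_{k=1}^{d-1}x_k\ket{\Omega_k,\Lambda_k}$ is product has measure zero in $\mathbb{C}^{d-1}$, hence is a proper subset of $\mathbb{C}^{d-1}$. Picking any ${\bf x}^*$ outside this exceptional set and setting ${\bf y}^* = ({\bf x}^*,0)\in\mathbb{C}^d$ yields $\Phi({\bf y}^*) = \sum_{k=1}^{d-1} x_k^*\ket{\Omega_k,\Lambda_k}$, which is not a product state. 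Therefore at least one defining polynomial $p_\alpha$ is not identically zero on $\mathbb{C}^d$ and $S\subsetneq \mathbb{C}^d$. Together with the base case $d-1=2$ established in Sec.~\ref{sec: d=2}, this closes the induction.

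I do not foresee a serious obstacle. The only step that warrants care is the passage from ``$\Phi({\bf y})$ is a product state in $\cH_\Omega\otimes\cH_\Lambda$'' to ``${\bf y}$ satisfies polynomial constraints'', but this is standard once the Segre characterisation via vanishing $2\times 2$ minors is recognised; the rest of the argument is linearity of $\Phi$ plus the measure-zero--versus--Zariski-closure correspondence.
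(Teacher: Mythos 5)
Your proof is correct, but it takes a genuinely different route from the paper. The paper argues concretely: it writes the partial sum $\sum_{k=1}^{d-1}y_k\ket{\Omega_k,\Lambda_k}$ in Schmidt form, splits into the cases of Schmidt rank $r=1$ (handled by the inductive hypothesis, measure zero in the first $d-1$ coordinates) and $r\geq 2$, and in the latter case uses Proposition~1 to show that the product condition forces the last coordinate $y_d$ to a single fixed value determined by the others, so the bad set is null by a coordinate-wise (Fubini-type) argument. You instead observe that the bad set is the preimage, under the linear map $\Phi({\bf y})=\sum_k y_k\ket{\Omega_k,\Lambda_k}$, of the determinantal (Segre) cone of rank-$\leq 1$ tensors, hence the common zero locus of finitely many quadratic polynomials in ${\bf y}$; you then use the inductive hypothesis only to exhibit one witness ${\bf y}^*=({\bf x}^*,0)$ outside it (measure zero in $\mathbb{C}^{d-1}$ implies the bad set there is proper), and invoke the standard fact that a proper algebraic subset of $\mathbb{C}^d$ is Lebesgue-null. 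This is sound: the proportionality condition is exactly ``all $2\times2$ minors of the coefficient matrix of $\Phi({\bf y})$ vanish'' (the zero vector, which only occurs at ${\bf y}=0$ by linear independence, is harmless), and a point with a nonvanishing minor makes the variety proper. Your route buys a cleaner and in fact stronger conclusion (the exceptional set is contained in a closed, nowhere dense algebraic variety), and it sidesteps the paper's Schmidt-form case split and the slightly delicate ``$y_d$ is pinned to one value'' measure argument; the paper's route is more elementary and self-contained, using only the explicit Proposition~1 identity, which is essentially your $2\times2$ minor condition in disguise.
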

\begin{proof}
To show this, first consider  the sum of the first $(d-1)$ terms. It can be written in the Schmidt form 
\be
\sum_{k=1}^{d-1} y_k \ket{\Omega_k,\Lambda_k}= \sum_{\ell =1}^r b_\ell \ket{O_\ell, L_\ell},
\ee
with nonzero coefficients $b_\ell$ and some integer $r\geq 1$. We know that $r=1$ is anyway only possible for a measure zero set of $\bf y$ (from the $d-1$ case), hence we can ignore this case. 

For $r\geq 2$, from the Prop.~1 with $\ket{a_i}=\ket{O_i}$ and $\ket{b_i}=\ket{L_i}$ to be product the superposition $\ket{\Psi_{\bf y}} \propto \sum_{\ell =1}^r b_\ell \ket{O_\ell,L_\ell} + y_d \ket{\Omega_d,\Lambda_d}$ must satisfy
\begin{align}
&(b_i + y_d \braket{O_i}{\Omega_d}\braket{L_i}{\Lambda_d})(b_j + y_d \braket{O_j}{\Omega_d}\braket{L_j}{\Lambda_d }) =\nonumber\\
&(y_d)^2 \braket{O_i}{\Omega_d}\braket{L_i}{\Lambda_d} \braket{O_j}{\Omega_d}\braket{L_j}{\Lambda_d } \Leftrightarrow \\
&y_d = \frac{b_j b_j}{b_i \braket{O_i}{\Omega_d}\braket{L_i}{\Lambda_d} + b_j \braket{O_j}{\Omega_d}\braket{L_j}{\Lambda_d}},
\label{eq: y cons}\end{align}
for each pair of $i$ and $j$. This condition yields an equality constraint on the last element $y_d$. Note also that the constraint is unchanged if  ${\bf y}$ is multiplies by a  constant  ${\bf y} = \alpha {\bf x}$ (as for the normalization or the global phase factor) 
\begin{align}
\alpha \, x_d = \frac{\alpha^2 b_j b_j}{\alpha \, b_i \braket{O_i}{\Omega_d}\braket{L_i}{\Lambda_d} + \alpha \, b_j \braket{O_j}{\Omega_d}\braket{L_j}{\Lambda_d}}\nonumber\\
\Leftrightarrow  x_d = \frac{b_j b_j}{ b_i \braket{O_i}{\Omega_d}\braket{L_i}{\Lambda_d} +   b_j \braket{O_j}{\Omega_d}\braket{L_j}{\Lambda_d}}.
\end{align}
The set of vectors $\bf y$ fulfilling the desired condition \eqref{eq: y cons}, that we just derived, is of measure zero, since its last component $y_d$ is required to have a fixed value.

\end{proof}

Given our linearly independent states $\{ \ket{\Omega_k,\Lambda_k}\}_{k=1}^d$ fulfilling Eq.~\eqref{eq: LI}, we have that 
\begin{itemize}
\item From the result of the $d=2$ case the set of ${\bf y} \in \Xi_2 \subset \mathbb{C}^2$ for which
\be
\sum_{k=1}^2 y_k \ket{\Omega_k,\Lambda_k}
\ee 
is proportional to a product state is of measure zero.

\item By using the Proposition~\ref{prop: 2}  recursively we get that all sets $\Xi_d \subset \mathbb{C}^d$ of vectors $\bf y$ for which 
\be
\sum_{k=1}^d y_k \ket{\Omega_k,\Lambda_k}
\ee
is proportional to a product state is also of measure zero. This is precisely what we aimed to show.
\end{itemize}

\subsection{Implication for von Neumann measurements}

Consider an orthonormal basis $\{\ket{\psi_k}\}_{k=1}^d$ of the Hilbert space $\cH_d$, and let us denote the set of all bases $\mathbb{B}$. Now consider the subset $\mathbb{B}_{E}\subset \mathbb{B}$ which corresponds to von Neumann measurements for which the post-measurement $N-1$ partite states $\braket{\psi_k}{\Psi}$ are entangled in each branch $k=1\dots d$. This subset consists of all bases for which all states $\ket{\psi_d}$ are not inside $\Xi_d$, i.e., it is given by
\be
\mathbb{B}_{E} = \mathbb{B}\setminus \left(\bigcup_{k=1}^d \Xi_d[\ket{\psi_k}]\right).
\ee
As $\Xi$ is of measure zero on the set of states, $\Xi_d[\ket{\psi_k}]$ is of measure zero on the set of bases for each $k$. Hence, a finite union of such sets $\left(\bigcup_{k=1}^d \Xi_d[\ket{\psi_k}]\right)$ is also of measure zero, i.e., $\mathbb{B}_{E}$ is of full measure.

So far we considered the entanglement of the post-measurement state for a particular splitting $G_1|G_2$ of the remaining $N-1$ parties, i.e., $\mathbb{B}_{E}$  consists of all measurements for which the post-measurement states are entangled in the $G_1|G_2$ splitting. But we need the state to be genuinely entangled, in other words entangled with respect to all possible splittings $G_1|G_2$. The set of bases which satisfies this can be constructed from $\mathbb{B}$ by subtracting all bases for which at least one post-measurement state is product in at least one splitting, formally
\be
\mathbb{B}_{GME}= \mathbb{B}\setminus \left(\bigcup_{\text{all} \,\, G_1|G_2}\big( \bigcup_{k=1}^d \Xi_d^{G_1|G_2}[\ket{\psi_k}]\big)\right).
\ee
Again we only subtracted a finite union of sets of measure zero from $\mathbb{B}$, therefore $\mathbb{B}_{GME}$ is a set of full measure. So not only there exists a measurement of the first party for which all post-measurement states are genuinely entangled, but almost all measurements (except a measure zero subset) are like that.

\subsection{From $N$ to $2$ parties}

We have shown that almost any measurement brings the genuinely entangled $N$-partite state $\ket{\Psi}$ to $d_1$ post-measurement $(N-1)$-partite genuinely entangled states $\ket{\Psi_k}$. This procedure can be continued by measuring the next subsystem of dimension $d_2$, again almost any basis (in fact we do not even need to make it depend on $k$) yields $d_1 d_2$ genuinely entangled $(N-2)$-partite states. Finally, there exist local measurements for $
N-2$ parties yielding $d_1d_2\dots d_{N-2}$ genuinely entangled bipartite states on the remaining two parties. In fact, almost any choice of local measurements satisfies this, as again the construction of such a set corresponds to subtracting finitely many measure-zero states from the set of all possible local measurements. This means that there exist fixed local measurements which yield entangled bipartite states for any choice of the two parties.

\section{Generalized graph states}
Graph states were generalized to dimensions $d>2$ \cite{Zhou2003,Bahramgiri2006}. The generalized graph state of $N$ qudits can then be defined as
\be
\ket{G} = \prod_{\{i,j\} \in E} C_{ij}^{w_{ij}} \ket{+}^{\otimes N}.
\ee
Here, $C_{ij}$ is a controlled operation defined by $C_{ij}\ket{k}_i\ket{l}_j = \omega^{kl}\ket{k}_i\ket{l}_j$ with $\omega = e^{2\pi i/p}$. $\{\ket{k}\}_{k \in \{0, ..., p-1\}}$ denotes the computational basis of $\mathbb{C}^p$ and $\ket{+} = \tfrac{1}{\sqrt{p}} \sum_{k=0}^{p-1} \ket{k}$. Each edge $\{i,j\}$ has a weight $w_{ij} \in \mathbb{F}^p$, where $\mathbb{F}^p$ is a finite field of order $p$.

Similar to the qubit case one can cut out a pair of (connected) qudits by measuring the neighborhood in the computational basis \cite{Zhou2003,Bahramgiri2006}.

\section{More details on the AKLT model}
As discussed in the main text, one can probabilistically cut the chain by measuring a site in the basis $\{\ket{\tilde0}, \ket{\tilde1}, \ket{\tilde2}\}$, where one identifies the physical three-level system with the virtual correlation space via $\ket{\tilde0} = \ket{00}, \ket{\tilde1} = \ket{11}, \ket{\tilde2} = \ket{\psi^+}$. For the cases where one projects on $\ket{\tilde0}$ and $\ket{\tilde1}$ the chain gets decoupled, whereas for the $\ket{\tilde2}$ case one performs entanglement swapping on the virtual $\ket{\psi^-}$ states shared between neighboring sites. In this case one creates a virtual pair in the state $\ket{\psi^+}$ between the sites $i-1$ and $i+1$ and the (measured) site $i$ is removed from the chain. Thus one ends up in a situation similar to the original one, up to a Pauli $Z$ correction in the virtual space. One proceeds by measuring site $i+1$ in the basis $\{\ket{\tilde0}, \ket{\tilde1}, \ket{\tilde2}\}$. Again, if one obtains $\ket{\tilde0}$ and $\ket{\tilde1}$ the chain gets decoupled, otherwise one creates a $\ket{\psi^-}$ state shared between sites $i-1$ and $i+2$. This procedure is iterated until the chain is finally decoupled. In each step, the probability to decouple is given by two third. Hence, measuring a region of $n$ sites adjacent to each side of the pair which one would like decouple, leads to a success probability of $p_{\rm{cut}} = 1 - 2\left(\tfrac{1}{3}\right)^n$, which approaches unity exponentially fast in $n$.

There are eight different possibilities for the final state of the pair on which the Bell inequality will be tested. These are $P_{\rm{triplet}, A}P_{\rm{triplet}, B} \ket{i}_A\left(\ket{0}_A\ket{1}_B - (-1)^{\#}\ket{1}_A\ket{0}_B \right) \ket{j}_B$ up to normalization with $i, j \in \{0,1\}$ and $\#$ referring to the number of cases where one projected on $\ket{\tilde2}$ .  $P_{\rm{triplet}, A}$ denotes the projector on the triplet space on site $A$ and similarly for $B$. In each case the state is pure and entangled and thus there exists a Bell inequality for which it achieves the quantum bound \cite{tCHSH}.

The total number of branches which enter the global Bell inequality is $3^{2n}$. However, since the probability for cutting the chain approaches one exponentially fast, it is sufficient to choose $n$ logarithmically small in the system size $N$. Choosing $n=2\rm{log}_3 N$ leads to a success probability of $p_{\rm{suc}}=1-\tfrac{2}{N^2}$. We now assume that in the successful cases we obtain the quantum value $\beta_*$ and in the other cases the worst case scenario is to obtain $-\beta_*$. Then the observed value $\beta$ will be $\beta = \beta_* (1-\tfrac{4}{N^2})$. Using the results from the next section and assuming the worst case value of $\beta_*$ and $\beta_L$ (worst in the sense discussed below), one finds that genuine multipartite entanglement can be shown for up to 
\be
\label{M}
M = \tfrac{1}{4} N^2 (1- \tfrac{\beta_L}{\beta_*}).
\ee
Hence genuine multipartite entanglement among all parties can be shown in the large $N$ limit. This is because the right hand side of eq. (\ref{M}) can be made larger than the number of parties $N$ for sufficiently large $N$ (since $\beta_L < \beta_*$). For fixed, small $N$ one has to choose $n$ suitably.
The total number of branches is given by $3^{2n} = 3^{2\rm{log}_3 N} = N^2$, and thus the protocol is efficient in the system size. There is some freedom in the choice of $n$, e.g. choosing $n=(1+\epsilon)\rm{log}_3 N$ with $\epsilon > 0$ is also possible and leads to a total number of branches for each pair of $N^{1+\epsilon}$.

\section{More details on Dicke states}
We start with the $N$-qubit Dicke state with $k$ ($1 \leq k \leq N-1$) excitations,
\be
\ket{D_k^N} = {N \choose k}^{-1/2} \sum_{\mathrm{permutations}} \ket{1}^{\otimes k} \ket{0}^{\otimes N-k},
\ee
and observe that
\be
\ket{D_k^N} = \frac{1}{\sqrt{2}} \left( \ket{0} \ket{D_k^{N-1}} + \ket{1} \ket{D_{k-1}^{N-1}} \right).
\ee
This means that measuring one party in the computational basis results in a $N-1$ qubit Dicke state, $\ket{D_k^{N-1}}$ for the plus outcome and $ \ket{D_{k-1}^{N-1}}$ for the minus outcome, for the remaining $N-1$ parties.

Consider now a Dicke state with a single excitation. A measurement of the Pauli $X$ operator on one party will lead to the state (up to normalization) $\ket{D_1^{N-1}} \pm \ket{0}^{\otimes N-1}$, depending on the measurement outcome. A similar result is obtained for the case of a $N$-qubit Dicke state with $N-1$ excitations.

Now the LOCC protocol for producing an entangled state for any pair of parties, starting from a $N$-qubit Dicke with $k$ excitations is the following. One performs $Z$ measurements until one has a $M$-qubit Dicke state with either one or $M-1$ excitations, and then switches to $X$ measurements. One then obtains a bipartite state of the form $\chi = a(\ket{01} + \ket{10}) + \sqrt{1-2a^2}\ket{00}$ (or $\chi = a (\ket{01} + \ket{10}) + \sqrt{1-2a^2}\ket{11}$) with $a \neq0$, which is entangled.

Note that there are exponentially (in the system size) many branches in this LOCC protocol, as it involves measurements on $N-2$ parties. Hence it is not efficient, in contrast to the other examples discussed in this work.

\section{Impact of non-maximal Bell violations}
\label{impact}
Here we discuss the impact of non-maximal violations of the bipartite Bell inequalities, which could occur due to noise. We assume that the observed value of the Bell inequality for each pair $\beta$ satisfies $\beta_L < \beta < \beta_*$. This can be achieved in a general setting by rescaling all bipartite Bell inequalities such that they all have the same value $\beta*$. The largest value of all $\beta_L$ also provides a local bound for all other bipartite Bell inequalities, and similarly the smallest value of all $\beta$ can be used. Hence the results below can be applied in general.

\subsection{Approach 1}
For a single pair define $p$ as the probability that the two parties belong to the same group. Then the maximal observed value $\beta$ compatible with this assumption is $\beta \leq p \beta_* + (1-p) \beta_L $, hence we get a bound on the probability
\be
p \geq \frac{\beta-\beta_L}{\beta_*-\beta_L}.
\ee 
Let us now look at a multipartite scenario and assume that $n$ parties appear in the same group with probability $p_n$, as shown in Fig.~\ref{fig:cover}b. And the last pair (with the n-th party and a new one) belongs to  the same group with probability $p$. One has that the last parties can be together in two disjoint ways: either all the $n+1$ parties are together (probability $p_{n+1}$), or the last two are together but the first $n$ are not together (probability $p'$), hence
\be
p_{n+1} + p'= p.
\ee
However, we also have that $p'\leq 1-p_{n}$, implying 
\be
p_{n+1} = p - p' \geq p_n -(1-p).
\ee
From this we easily obtain
\be
p_{n+1} \geq 1- n (1-p).
\ee
As long as the probability for the $p_{n+1}$ is positive the observed correlations necessarily contain a fully non-separable term. Hence, a violation $\beta=\beta_* f$ for each pair in a chain/cover (where $f$ is some kind of fidelity) is sufficient to prove genuine $M$-entanglement for up to 
\be
\label{M1}
M=  \left \lfloor \frac{1}{1-p} \right \rfloor = \left \lfloor \frac{\beta_*-\beta_L}{\beta_*-\beta} \right \rfloor.
\ee

For a state where one can create a Bell state for each pair in the cover (e.g. graph states), one can use the CHSH inequality with $\beta_L=2$ and $\beta_*=2\sqrt{2}$. The equation above then becomes
\be
M =  \left \lfloor \frac{2-\sqrt{2}}{2}\frac{1}{1-f} \right \rfloor.
\ee

\begin{figure}[!ht]
\centering
\vspace{-3cm}
\includegraphics[width=0.98\columnwidth]{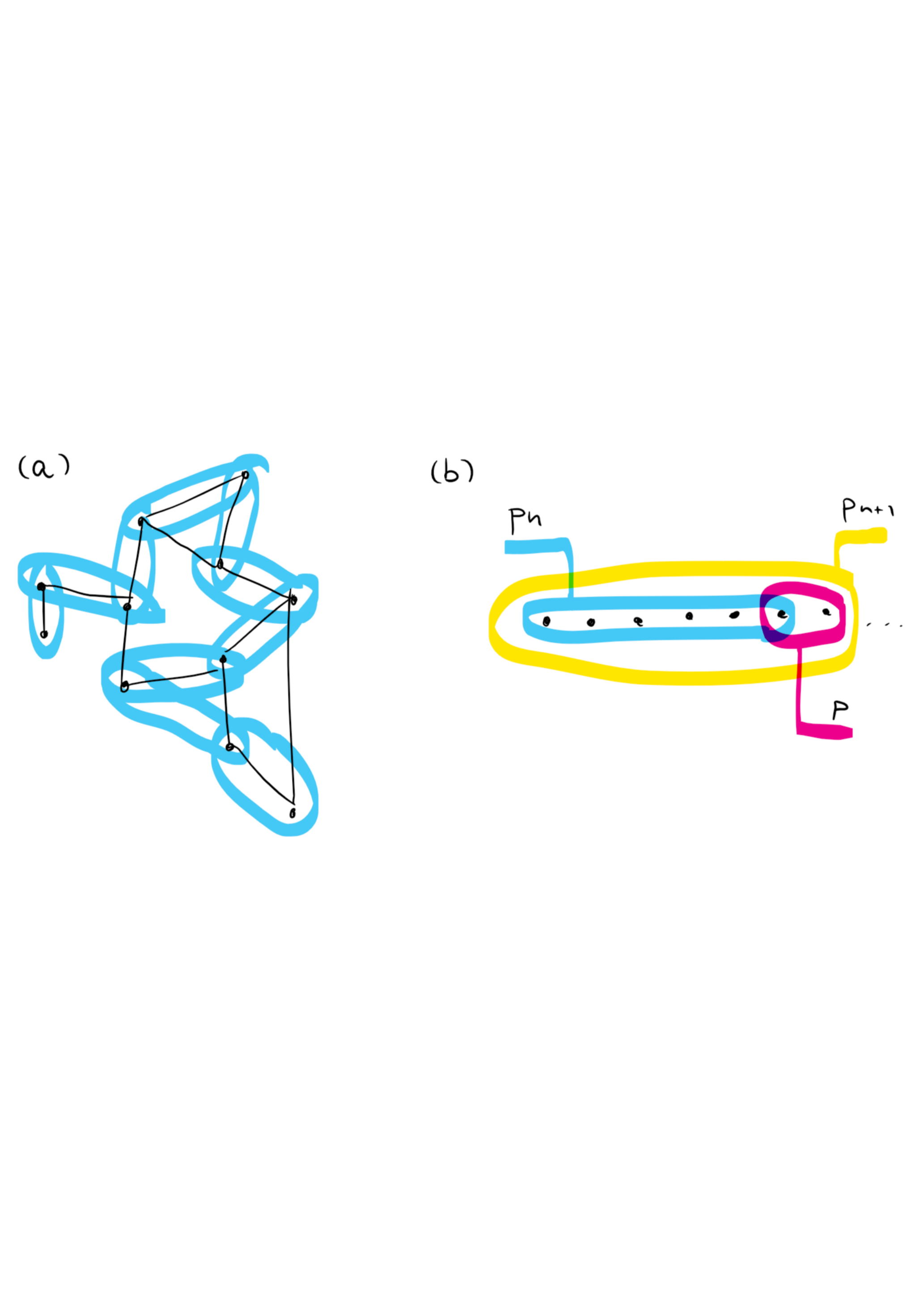}
\vspace{-4cm}
\caption{(a) Illustration of a covering set (blue ellipses). (b) Illustration for the section on the impact of non-maximal Bell violations.}
\label{fig:cover}
\end{figure}

\subsection{Approach 2}

Alternatively one can derive the impact of non-maximal violations of the bipartite Bell inequalities from eq. (\ref{BBS}). We assume that there are $N-1$ pairs in the covering set $E$, and that one obtains a value of $\beta$ for the Bell operator for each of them. A biseparable model can at most reach a value of $(N-2)\beta_* + \beta_L$, whereas the observed quantum value is $(N-1)\beta$. This shows that for sufficiently large $\beta$ the observed value is incompatible with a biseparable model, and hence shows genuine $N$-partite entanglement. A straightforward calculation gives
\be
N = \left \lfloor \frac{\beta_*-\beta_L}{\beta_*-\beta} + 1 \right \rfloor,
\ee
which is up the term $+1$ identical to eq. (\ref{M1}).

When $\beta$ is too small to show the incompatibility with a biseparable model, one might still be able to show the incompatibility with an $n$-separable model. Such a model can at most reach a value of $(N-n)\beta_* + (n-1)\beta_L$, which for sufficiently large $\beta$ cannot explain the observed value $(N-1)\beta$. For the number of genuinely entangled parties $M=\tfrac{N}{n}$ one then obtains
\be
M =  \left \lfloor \frac{N(\beta_*-\beta_L)}{N(\beta_*-\beta) + \beta - \beta_L} \right \rfloor,
\ee
which reproduces eq. (\ref{M1}) in the limit $N \rightarrow \infty$.

\section{Preparation of cluster states with noisy gates}
We simulate the generation of cluster states using noisy operations in the following way. The imperfect initialization of qubits is modeled by perfect initialization followed by LDN parametrized by $p_1$. Noisy single- and two-qubit gates are modeled by LDN parametrized by $p_1$ and $p_2$, followed by the ideal gates. Finally an imperfect single-qubit measurement is described by LDN parametrized by $p_1$ followed by the perfect measurement.

\bibliographystyle{apsrev4-1}
\bibliography{DIbib}

\end{document}